\documentclass{article}
\usepackage[utf8]{inputenc}
\usepackage{authblk}
\usepackage{setspace}
\usepackage[margin=1.25in]{geometry}
\usepackage{graphicx}
\graphicspath{ {./figures/} }
\usepackage{subcaption}
\usepackage{amsmath}
\usepackage{amssymb}
\usepackage{amsthm}
\usepackage{bm}
\newtheorem{prop}{Proposition}
\usepackage{booktabs}
\usepackage{caption}
\usepackage{changepage}
\usepackage{float}
\usepackage{booktabs} 
\usepackage{makecell}
\usepackage{microtype}

\usepackage[style=numeric,
            citestyle=numeric-comp,
            sorting=none,
            backend=bibtex]{biblatex}
\usepackage[colorlinks=true,
            linkcolor=blue,
            citecolor=blue,
            urlcolor=black]{hyperref}

\title{A Robust Numerical Framework for Multi-Strain Epidemics}

\author[1*,2]{Fadilah Ilahi}
\author[1]{Ian Hall}
\author[1]{Thomas House}

\affil[1]{Department of Mathematics, University of Manchester, Manchester, United Kingdom.}
\affil[2]{Department of Mathematics, Universitas Islam Negeri (UIN) Sunan Gunung Djati, Bandung, Indonesia.}
\affil[$\ast$]{\href{mailto:fadilah.ilahi@manchester.ac.uk}{fadilah.ilahi@manchester.ac.uk}}

\date{}

\begin{document}
\sloppy

\maketitle

\begin{abstract}
Epidemiological models often take the form of systems of differential equations of varying levels of complexity. In multi-strain models, complexity arises from the interaction of diverse pathogen types that may exhibit different patterns of cross-immunity due to the host's immune system. Where previous work has tended to focus on theoretical analysis of this complexity, or model simplification, less effort has been given to efficient numerical exploration of full multi-strain disease dynamics. Here, we propose a computational framework for representing multi-strain disease transmission models, through a mathematically transparent hard-coded mapping between sets and integers. By mapping infection histories into Boolean matrices, the framework eliminates explicit iteration over strain combinations while preserving key biological processes, including sequential infection and partial cross-immunity. Benchmark comparisons of numerical solutions of our matrix formulation and a set-indexed implementation demonstrate substantial performance gains. Across models with increasing numbers of strains, the matrix formulation consistently reduces runtime, with a speedup exceeding two orders of magnitude for systems with 11 strains. We present an application of this approach to the case of COVID-19 in Indonesia, focusing on the replacement of the Delta variant by Omicron. We hope that our approach represents a robust computational approach to automatically generate diverse multi-strain epidemic models that can be solved in a computationally efficient manner for a variety of scientific and public-health questions. 
\end{abstract}

\section*{Introduction}

Influenza, COVID-19, and dengue are examples of viruses of significant epidemiological and public health interest where multiple strains can play an important role in the infectious disease dynamics \cite{kucharski2012original, dyson2021future}. These diseases involve various types of the pathogen, sometimes labelled serotypes, subtypes or variants as well as strains, that can pose additional challenges. Dengue, for instance, has four distinct serotypes and infection with one serotype provides only partial immunity to the others \cite{senaratne2020dengue, Woodall2014PartialEpidemiology}; secondary and tertiary infections can be more severe as a consequence of being infected by different strains from the previous infection \cite{Recker2009ImmunologicalDengue}. As well as having subtypes such as A(H1N1) and A(H3N2), influenza is constantly evolving through a processes known as antigenic drift and antigenic shift \cite{andreasen1997dynamics, boni2004influenza, ballesteros2009influenza}. And the COVID-19 pandemic saw impacts throughout the World due to the emergence and spread of variants such as Alpha, Delta and Omicron \cite{WHOvariant}.

The simplest epidemic models such as the Susceptible-Infected-Recovered (SIR) model typically focus on a single strain of a pathogen and do not seek to capture strain history within the complex dynamics of multi-strain diseases \cite{allen2008mathematical, champagne2019comparison,lazebnik2022generic}. As a consequence, they also fail to account for partial immunity that can lead to secondary, sometimes more severe, infection with the currently circulating strain. Addressing this complexity requires models that can account for the presence of multiple strains and track individual infection histories. One of the most sophisticated approaches to address this problem is the individual history-based model that has been done by Andreasen et al. (1997) \cite{andreasen1997dynamics, kucharski2016capturing}. This approach allows for detailed transmission process of how diseases spread within a population by tracking the immune status of each individual based on their history of exposure to different strains. However, this model can be computationally challenging when working with a large number of strains in the set-equation framework in which it is formulated. A common response to this challenge has been to derive simplified models that are mathematically distinct from the model of Andreasen et al., while preserving the most epidemiologically important properties for the research questions in hand \cite{gog2002dynamics, kucharski2016capturing}.

Here we develop a more computationally straightforward and efficient framework that transforms the transmission of the multi-strain disease from a traditional system of sets to a matrix-based numerical system. This method condenses the complex interactions between multiple strains and the host population into a structured model that captures key elements such as infection histories, partial cross-immunity, and strain-specific dynamics. It is important to note that the underlying epidemiological model remains \emph{mathematically} unchanged from the formulation of Andreasen et al. \cite{andreasen1997dynamics}: The primary contribution of this work is a \emph{computational} reformulation that enables the same model to be simulated substantially faster as the number of circulating strains $n$ increases without requiring modification of the biological assumptions.

In this paper, we therefore present a numerical framework for multi-strain models, designed to capture the essential dynamics of diseases like SARS-CoV-2 which involve the co-circulation of multiple variants \cite{wang2022multiscale, souto2019multiple, boni2004influenza} during their transmission, by mapping set-indexed compartments into structured vectors and indicator matrices, we reformulate a broad class of multi-strain transmission models into compact linear-algebraic expressions. This representation eliminates explicit looping over strain histories, enabling efficient numerical simulation while preserving key biological mechanisms such as partial cross-immunity and sequential infection. We note, however, that this matrix formulation is likely to be less suitable for analytical work and so provide mappings between both approaches, which are best viewed as complementary.

We then apply our framework to strain replacement dynamics \cite{martcheva2008vaccine, lythgoe2023lineage, mahroum2024vaccine}, a phenomenon in which one pathogen variant progressively displaces another due to differences in transmissibility or immune escape. Using SARS-CoV-2 Delta and Omicron variants in Indonesia as a motivating example, we show that the matrix formulation reproduces observed replacement patterns and enables direct comparison between empirical data and mechanistic model output. We further illustrate scalability by extending the model to higher-dimensional systems with multiple competing strains. By combining mechanistic epidemiological modeling with numerical implementation, this framework provides a computationally tractable bridge between theory and data for studying multi-strain epidemics.

\section*{Materials and Methods}
In this section, we develop the methods and establish the assumptions for constructing a new framework and to tackle the problems discussed in the introduction. First, we adapt the model from Andreasen et al.\ \cite{andreasen1997dynamics} in order to account for the history-based aspects of the multi-strain model. Second, we evaluate the model to compare with real data to represent the strain replacement using the newly constructed model. 

\subsection*{Model assumptions}

The multi-strain model in this research is the influential multi-strain model of Andreasen et al.\ \cite{andreasen1997dynamics}, where the infection is to be considered as history-based with partial cross-immunity. The model assumes that the population is initially homogeneous, meaning that all individuals are equally likely to be exposed to any strain of the pathogen and no individual has inherent immunity to any strain at an early stage. The transmission of each strain follows a process governed by a set of parameters such as transmission rate, recovery rate, and cross-immunity rate. The model assumes that multiple strains of the pathogen can circulate simultaneously within the population. Infection with one strain can provide some cross-immunity to other strains but this need not be sufficient to prevent infection with a new strain, i.e.~strains interact through partial cross-immunity \cite{Woodall2014PartialEpidemiology}. In addition, the population is assumed to be closed without migrations in or out of the population during the epidemic period, although births and deaths can be included \cite{lazebnik2022generic}. 

We assume the total population size is $N$ and the set of serotype indices is 
$\mathcal{K} = \{1,2,3,...,n\}$ \cite{andreasen2023seasonality}. In terms of history of infection with different strains, this can be indexed with a subset $\mathcal{J}\subseteq\mathcal{K}$. For example, for Dengue fever there are four serotypes and therefore $\mathcal{K}=\{1, 2, 3, 4\}$. This model will make use of the set of all subsets of $\mathcal{K}$, its \emph{power set}, $P(\mathcal{K})$, sometimes written $2^{\mathcal{K}}$ and of size $|P(\mathcal{K})|=2^n$.

Usually this system will be analysed as starting from a perturbation around the disease-free state, where the population with no history of infection is equal to the total population, $S_{\varnothing} = N$. Individuals then potentially experience infection with one or more (up to the maximum of $n$) strains \cite{andreasen1997dynamics, wikramaratna2010effects}.

In the transmission process, for an individual with infection history $\mathcal{J}\subseteq\mathcal{K}$ it is assumed that further infection only occurs with strains $i$ such that $i \neq \mathcal{J}$. 
The rate constant for infection with each strain denoted by $\beta_i$. The per-individual rate of transmission of the disease to those susceptible to strain $i$ is given by the force of infection $\Lambda^i$, which is determined by summing over all the hosts infections with strain $i$:
\begin{equation}
    \begin{aligned}
        \Lambda^i={\beta_i}\sum_{\mathcal{J}\subseteq{\mathcal{K}\setminus\{i\}}}^{}I_\mathcal{J}^i \, .
        \label{foi}
    \end{aligned}
\end{equation}
For second and subsequent infectious, we need to consider cross-immunity. The cross-reactivity factor for those who are infected bystrain history $\mathcal{J}$ is $\sigma_\mathcal{J}^i$. As a consequence, in the population $S_\mathcal{J}$ will be depleted at rate
$$
\sum_{i\notin\mathcal{J}}^{} \sigma_\mathcal{J}^i \Lambda^i S_\mathcal{J} \, .
$$
The number of susceptible classes will be $2^n$ while the number of infected classes will be $n2^{n-1}$ giving a total of $(n+2)2^{n-1}$ \cite{andreasen1997dynamics, kucharski2016capturing}. Based on the original model of Andreasen et al. \cite{andreasen1997dynamics, kucharski2016capturing}, it is assumed that the model includes the demography where there are births and death rates represented by $b$ and $\mu$ respectively. So the equilibrium population size will be $N=b/{\mu}$. The mathematical model for this system is then\cite{kucharski2016capturing}:
\begin{equation}
    \begin{aligned}
        \dot S_{\varnothing} &= b - \sum_{j\in\mathcal{K}}^{} \Lambda^jS_{\varnothing} - \mu S_{\varnothing},\\
        \dot I_\mathcal{J}^i &= \sigma_\mathcal{J}^i \Lambda^i S_\mathcal{J} - (\mu+\gamma)I_\mathcal{J}^i,\\
        \dot S_\mathcal{J \neq \varnothing} &= \sum_{j \in \mathcal{J}}^{} \gamma I_{\mathcal{J}\setminus \{j\}}^j - \sum_{i\notin\mathcal{J}}^{} \sigma_\mathcal{J}^i \Lambda^i S_\mathcal{J} - \mu S_\mathcal{J},
        \label{viggooriginal}
    \end{aligned}
\end{equation}
\noindent
where $\gamma$ is the rate at which infected individuals recover from infection. Note that $\mathcal{J}\neq0$ and $\sigma_\mathcal{K}^i = 0$ for all $i$. The impact cross-reactivity is first observed in the second infection and in general depends on both the history strains $\mathcal{J}$ and current strain $i$. For complex system like dengue, the combination of strains that cause antibody-dependent enhancement (ADE) is uncertain, which means that the cross-reactivity will only depends on the distinct strains in the second, third \ldots{} and $n$-th infections \cite{wikramaratna2010effects}. In theory, strong cross-reaction will involve three or more strains in the epidemic phase \cite{dutta2022covid, kamo2002effect, andreasen1997dynamics}. Boni (2004) further observed that the cross-reactivity influences the epidemic period \cite{boni2004influenza}. Medical interventions such as vaccines may change this situation \cite{martcheva2008vaccine}, but we assume in this paper that the model is constructed without any treatment. The natural mechanism of cross-reactivity manifests in two ways, namely ``reduces susceptibility" and ``reduced transmission" \cite{kucharski2016capturing, martcheva2008vaccine}. Reduced susceptibility refers to the hosts that have partial immunity because they have been infected by another strain, while reduced transmission means that they are less likely to transmit the specific strain to others \cite{kucharski2016capturing}.

\subsection*{Numerical framework of history-based multi-strain transmission}
We start our modification of the model \eqref{viggooriginal} by writing out as three equations for forces of infection, infected class, and susceptible classes. These make explicit the possible seasonality and use an indicator function $\mathbb{I}_{\{\circ{}\}}$ to avoid treating the empty set in a separate equation. This gives a new system as follow:
\begin{subequations} 
    \begin{align}
    \Lambda^i & ={\beta_i}(t) \sum_{\mathcal{J}\subseteq{\mathcal{K}\setminus\{i\}}}^{}I_\mathcal{J}^i
    \label{setfoi} ,\\
        \dot I_\mathcal{J}^i &= \sigma_\mathcal{J}^i \Lambda^i S_\mathcal{J} - (\mu+\gamma)I_\mathcal{J}^i \label{setIdot} ,\\
        \dot S_\mathcal{J} &= b(t)\mathbb{I}_{\{\mathcal{J}= \varnothing\}  } + \sum_{j \in \mathcal{J}}^{} \gamma I_{\mathcal{J}\setminus \{j\}}^j - \sum_{i\notin\mathcal{J}}^{} \sigma_\mathcal{J}^i \Lambda^i S_\mathcal{J} - \mu S_\mathcal{J}\label{setSdot} .
    \end{align}
        \label{setSIeqns}
\end{subequations}
The framework is constructed using set notation \cite{andreasen1997dynamics}, which makes it comprehensible and adaptable to e.g. dynamical systems analysis by hand. However, when dealing with numerous categories and various strain types, direct numerical simulation tends to perform slowly due to extensive looping processes required to generate the outcomes. 

We now start to represent the entire system in matrix format, focusing on enhancing efficiency. To address this goal, the system will be restructured and expressed as a matrix formulation. Let $i,j,...\in[n]=\mathcal{K}$ denote the type of strains where $n$ is the number of strains and we use the notation $\{n\}$ for the set of integers from 1 to $n$, $\{1,...,n\}$. Let $a,b,...\in[2^n]$ denote the possible infection histories where $2^n$ is the number of distinct subsets of $\mathcal{K}$. The transforms from sets to the integers,  $\phi:P(\mathcal{K}) \rightarrow [2^n]$ and from integers to sets via its inverse, $\phi^{-1}:[2^n] \rightarrow P(\mathcal{K})$. Logical equivalence between current strain being $i$ and the subset of history strain excluding current strain $i \notin \mathcal{J}$ will be guaranteed by the following proposition:
\begin{prop}
\label{prop1}
\text{Let} $i,j,...\in \mathcal{K}$ and $\mathcal{J} \subseteq \mathcal{K}$, then $i \notin \mathcal{J}$ \text{if and only if} $\mathcal{J} \subseteq \mathcal{K}\setminus \{i\}$.
\label{proposition_1}
\end{prop}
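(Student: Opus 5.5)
The plan is a direct element-chasing argument on the definitions of subset and set difference, proving each implication separately. Throughout I use that $\mathcal{J}\subseteq\mathcal{K}$ and $i\in\mathcal{K}$, both given in the statement of Proposition~\ref{prop1}.

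For the forward direction, assume $i\notin\mathcal{J}$ and take an arbitrary $j\in\mathcal{J}$. Since $\mathcal{J}\subseteq\mathcal{K}$, we have $j\in\mathcal{K}$. Also $j\neq i$, because otherwise $i\in\mathcal{J}$, which would contradict the assumption. Hence $j\in\mathcal{K}$ and $j\notin\{i\}$, which by definition of set difference means $j\in\mathcal{K}\setminus\{i\}$. Since $j$ was arbitrary, $\mathcal{J}\subseteq\mathcal{K}\setminus\{i\}$.

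For the converse, assume $\mathcal{J}\subseteq\mathcal{K}\setminus\{i\}$ and suppose for contradiction that $i\in\mathcal{J}$. The inclusion then gives $i\in\mathcal{K}\setminus\{i\}$, so $i\notin\{i\}$, which is absurd. Therefore $i\notin\mathcal{J}$.

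No step is a genuine obstacle. The only point needing care is the forward direction, where the hypothesis $\mathcal{J}\subseteq\mathcal{K}$ must be invoked to place $j$ in $\mathcal{K}$. Without it, the equivalence fails (take $\mathcal{J}$ containing an element outside $\mathcal{K}$). I would close by remarking how the proposition is used: the index set in the force of infection \eqref{setfoi}, namely $\mathcal{J}\subseteq\mathcal{K}\setminus\{i\}$, coincides with the condition $i\notin\mathcal{J}$ appearing in \eqref{setSdot}. This justifies encoding both conditions by a single Boolean indicator entry under the map $\phi$.
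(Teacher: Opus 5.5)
Your proof is correct and matches the paper's own argument: element-chasing for the forward direction, using $\mathcal{J}\subseteq\mathcal{K}$ to place each element in $\mathcal{K}$ and then excluding $i$, and contradiction via $i\in\mathcal{K}\setminus\{i\}$ for the converse. Your added remark is also accurate: the hypothesis $\mathcal{J}\subseteq\mathcal{K}$ is needed only for the forward direction, and the equivalence fails without it.
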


\begin{proof}
By the definition of set subtraction, $\mathcal{K} \setminus \{i\}$ contains all elements of $\mathcal{K}$ \text{except} $i$. Let $x \in \mathcal{J}$. Since $\mathcal{J} \subseteq \mathcal{K}  \text{ thus }  x \in \mathcal{K}$. Because $i \notin \mathcal{J} \text{ then } x \neq i$. Therefore, $x \in \mathcal{K} \text{ and } x \neq i, \text{ implies } x \in \mathcal{K} \setminus \{i\}$. Since $x$ is arbitrary, it follows that $\mathcal{J} \subseteq \mathcal{K} \setminus \{i\}. \text{ Thus } i \notin \mathcal{J} \Rightarrow \mathcal{J} \subseteq \mathcal{K} \setminus \{i\}$. By contradiction, it will be shown that if $\mathcal{J} \subseteq \mathcal{K} \setminus \{i\}$ then $i \notin \mathcal{J}$. Suppose that $i \in \mathcal{J}$. Then $i \text{ must also be in } \mathcal{K} \setminus \{i\} \text{ because } \mathcal{J} \subseteq \mathcal{K} \setminus \{i\}$. However, $\mathcal{K} \setminus \{i\}$ means that $i \notin \mathcal{K} \setminus \{i\}$ which contradicts the assumption that $i \in \mathcal{J}$. Therefore, $i \notin \mathcal{J}$. Thus $\mathcal{J} \subseteq \mathcal{K} \setminus \{i\} \Rightarrow i \notin \mathcal{J}$.
\end{proof}

This proposition will let us make a set manipulation that gives clarity about historical strain that excluding the current strain. Next, the dynamical variables in the system \eqref{setIdot} and \eqref{setSdot} can be written as a vector and matrix as follows:
\begin{align}
     \mathbf{S} &=(S_{a})_{a \in [2^n]} ; \\
    \boldsymbol{I} &=[I_{a.i}]_{a \in [2^n],i \in [n]} ;
\end{align}
\noindent
where $\mathbf{S}$ is a vector of length $2^n$ and  $\boldsymbol{I}$ is a matrix of size $2^n \times n$ respectively. The force of infection with strain $i$, $\Lambda^i$  from \eqref{setfoi} can be written as 
\begin{equation}
\begin{split}
    \Lambda^i(t) &=\beta^i \sum_{\mathcal{J} \subseteq \mathcal{K} \setminus \{i\}}^{} I(t)_\mathcal{J}^i,\\
&= \left(\Lambda^i(t)\right)_{i \in [n]} .
\end{split}
\label{matrixfoi}
\end{equation}
We therefore define a force of infection vector $\mathbf{\Lambda}(t)$ is the element-wise multiplication (Hadamard product) of vector of transmission rates $\boldsymbol{\beta}$ and a vector $\mathbf{J}(t) = (J_i)_{i\in [n]}$, $\mathbf{\Lambda}(t) = \boldsymbol{\beta}(t) \circ \mathbf{J}(t)$ with $\left({J}_i\right)_{i \in [n]}=\sum_{\mathcal{J} \in \mathcal{K} \setminus \{i\}}^{} I_\mathcal{J}^i$ describing the number of all individuals currently infectious with strain $i$ no matter what past history strain they have. Let $\boldsymbol{M}=[\boldsymbol{M}_{\phi(\mathcal{J}),i}]$ be $2^n \times n$ matrix, such that 
\begin{equation}
\boldsymbol{M}_{\phi(\mathcal{J}),i} = 
        \begin{cases} 
          1 &  \text{if } \mathcal{J} \subseteq \mathcal{K}\setminus 
          \{i\}\\
          0 & \text{otherwise.}
       \end{cases}
       \label{M_matrix}
\end{equation} 
This $\boldsymbol{M}$ represents the indicator matrix where its elements consist of 0 and 1 based on the condition in Eq.~\eqref{M_matrix}. We now derive the details of $\left({J}_i\right)_{i \in [n]}$ whereby it can be represent as the sum of dot product of $\boldsymbol{M} \text{ and } \boldsymbol{I}$ with associated strain in the map $\phi$: 
\begin{equation}
\begin{split}
      J_i&=\sum_{a \in [2^n]}{} [\boldsymbol{M}]_{ai} [\boldsymbol{I}]_{ai} = [\boldsymbol{M}^{\mathsf{T}} \boldsymbol{I}]_{ii}\\
        \Rightarrow \mathbf{J}&=\mathrm{diag}(\boldsymbol{M}^{\mathsf{T}} \boldsymbol{I}) . \label{J_matrix}\\
\end{split}
\end{equation}
       
The cross reactivity parameters $\sigma_\mathcal{J}^{i}$ can be represented using a matrix
$\boldsymbol{\sigma} = [\sigma_{a,i}]$ where strain $i$ interacts with strain history indexed by $a \in [2^n]$. This lets us write equation \eqref{setIdot} for the elements of $\dot{\boldsymbol{I}}$ in matrix notation:
\begin{equation}   
    \begin{aligned}
    \dot{\boldsymbol{I}}=\boldsymbol{\sigma} \circ (\mathbf{1}_{2^n} \mathbf{\Lambda}^{\mathsf{T}} ) \circ (\mathbf{S} \mathbf{1}_n^{\mathsf{T}}) - (\gamma + \mu) \boldsymbol{I} ,
    \end{aligned}
\end{equation}
\noindent
where $\mathbf{1}_\ell$ is a length-$\ell$ vector whose elements are all equal to $1$ and $(\boldsymbol{\sigma} \circ (\mathbf{1}_{2^n} \mathbf{\Lambda}^{\mathsf{T}} ) \circ (\mathbf{S} \mathbf{1}_n^{\mathsf{T}}))$ is the incidence of a new cases in population (class) that comes from the susceptible population that has been infected by $i$ strain and potentially experienced cross-immunity. We now turn to \eqref{setSdot}, which can be written in the form
\begin{equation}
    \begin{aligned}
    \dot{\boldsymbol{S}}_\mathcal{J} = b(t)\mathbf{u}_{\phi(\varnothing)} +
    \mathbf{V}_\gamma -\mathbf{V}_\sigma - \mu \mathbf{S},
    \end{aligned}
\end{equation}
\noindent
where $\mathbf{u}_a$ is a vector of all $0$ except for the $a$-th element, which is $1$.
We now need to consider the middle two terms that we have used $\mathbf{V}$'s to stand in for above. The vector $\mathbf{V}_\gamma$ represents the sum of the terms associated with recovery while $\mathbf{V}_\sigma$ is the sum terms associated with infection. To transform the system from set to integer, we will also need an invertible map $\varphi$ between $2^n \times n$ matrices and length-$(n 2^n)$ vectors such that
\begin{align*}
    \varphi : [2^n] \times [n] \leftrightarrow [n 2^n]
\end{align*}

Let $\sum_{i \in \mathcal{J}}{} I_{\mathcal{J} \setminus\{i \}}$ describes the class infected by strain $i$ after they got infected by all strains in $\mathcal{J} \setminus\{i\}$. This can be formulated as:
\begin{equation}
    \begin{aligned}
        \sum_{i \in \mathcal{J}}{} I_{\mathcal{J} \setminus \{i \} }&=
        \sum_{i \in \mathcal{J}{}} \sum_{\mathcal{I}} \boldsymbol{I}_{\{ \mathcal{I} = \mathcal{J}\setminus 
          \{i\} \}} I_\mathcal{I}^{i} \\
          &= \sum_{i,\mathcal{I}{}} \boldsymbol{I}_{\{ i \in \mathcal{J}  \text{ and }  \mathcal{I} = \mathcal{J}\setminus 
          \{i\} \} } I_\mathcal{I}^{i}\\
          & = \sum_{i,\mathcal{I}{}} q_{\mathcal{J},\mathcal{I},i} I_\mathcal{I}^{i} \, ,
     \end{aligned}
\end{equation}       
where
\begin{align}
 q_{\mathcal{J},\mathcal{I},i} = 
        \begin{cases} 
          1 &  \text{if }  i \in \mathcal{J}  \text{ and }  \mathcal{I} = \mathcal{J}\setminus 
          \{i\} \\
          0 & \text{otherwise.}
       \end{cases}
\end{align}
This 3-dimensional object serves as an indicator, comprising the current strain $i$, history of strains $\mathcal{J}$, and the subsequent infection phase with different strains originating from recovered individuals, specifically denoted by $\mathcal{I} \setminus \{i\}$. In this context, the notation $\mathcal{I} \setminus \{i\}$ is used to indicate that, during the second, third, or any subsequent cycle of infection, it is presumed that the infection is caused by different strains than those previously involved.

Then let
\begin{align*}
    \boldsymbol{Q} = [Q_{a,A} ]_{a\in [2^n], A \in [n2^n]} ,
\end{align*}
where
\begin{align*}
     Q_{\phi(\mathcal{J}),\varphi(\phi(\mathcal{I}),i)} = q_{\mathcal{J},\mathcal{I},i} .
\end{align*} 
We will then be able to write
$$
\mathbf{v}(\boldsymbol{I}) = (\mathrm{v}_A(\boldsymbol{I})), \qquad \text{ where } \quad
    \mathrm{v}_{\varphi(\phi(\mathcal{I}),i)}(\boldsymbol{I}) = I_\mathcal{I}^{i} \, , 
$$
and also
$$
    \mathbf{V}_{\gamma} = \gamma \boldsymbol{Q} \mathbf{v}(\boldsymbol{I}). 
$$
Note that from Proposition \ref{proposition_1}, $i \notin \mathcal{J}$ if and only if $\mathcal{J} \subseteq \mathcal{K}\setminus \{i\}$. 
Mathematically, therefore, $\mathbf{V}_\sigma$ is formulated as follows:
\begin{center}
    $\mathbf{V}_\sigma = \mathrm{diag}\left( \boldsymbol{M}^\mathsf{T}({\boldsymbol{\sigma}} \circ (\mathbf{1}_{2^n} \cdot \boldsymbol{\Lambda}^\mathsf{T}) \circ (\mathbf{S} \mathbf{1}_n^ \mathsf{T}))\right)$
\end{center}

In summary, the system presented in Eq. \eqref{setSIeqns} can been converted into a matrix format, which simplifies numerical computations. This transformation enhances the efficacy of simulations because the conversion addresses the issue of looping indices in the original set of equations. This is achieved through the use of indicator matrices, which are composed of binary values, specifically 0s and 1s, thereby streamlining the process. The new numerical framework for the multi-strain model is presented as follow:  
\begin{subequations} 
\begin{align}
    \bm{\Lambda}(t) & = \bm{\beta}(t) \circ \mathrm{diag}\left(\boldsymbol{M}^{\mathsf{T}} \boldsymbol{I}\right),
    \label{matfoi}\\
        \boldsymbol{\dot I} & =\boldsymbol{\sigma} \circ \left(\mathbf{1}_{2^n} \mathbf{\Lambda}^{\mathsf{T}} \right) \circ \left(\mathbf{S} \mathbf{1}_n^{\mathsf{T}}\right) - (\gamma + \mu) \boldsymbol{I}, \label{matIdot}\\
    \dot{\mathbf{S}} &= b(t)\mathbf{u}_{\phi^{-1}(\varnothing)} + \gamma \bm{Q} \mathbf{v}(\boldsymbol{I}) - \mathrm{diag} \left( \bm{M}^\mathsf{T}\left({\boldsymbol{\sigma}} \circ \left(\mathbf{1}_{2^n} \boldsymbol{\Lambda}^\mathsf{T}\right) \circ \left(\mathbf{S} \mathbf{1}_n^\mathsf{T}\right)\right)\right) - \mu \mathbf{S}.
\end{align}
        \label{matSIeqns}
\end{subequations} 
It is important to observe that working with this matrix system requires the use of matrix operations to ensure that every dimension of the vectors aligns appropriately for the operation to be successfully executed, meaning it becomes necessary to reshape the dimensions of the matrix to facilitate the numerical simulation without altering the perception of the matrix's original size. 

Table \ref{tab:operation} is presented to document the matrix operations involved in converting a set equation into an integer-based matrix system. This conversion process is simulated using the \texttt{Python} \cite{Python} programming language \texttt{version 3.10}, specifically employing the \texttt{NumPy} package.

\begin{table}[H]
\centering
\caption{Summary of Matrix Operation in \texttt{Python} Implementation}
\begin{tabular}{@{} l l l @{}}
\hline
Operation & Definition & \texttt{Python} Implementation \\
\hline
Matrix product & $\mathbf{AB} = \sum_k A_{ik}B_{kj}$ & $\mathbf{A} @ \mathbf{B}$ \\
Hadamard (element-wise) product & $[\mathbf{A \circ B}]_{ij} = A_{ij} B_{ij}$ & $\mathbf{A} * \mathbf{B}$ \\
Outer product & $[\mathbf{A \otimes B}]_{ij} = A_{i} B_{j}$ & $\text{np.outer}(\mathbf{A,B})$ \\
Matrix transpose & $\mathbf{A}^{\mathsf{T}}$ & $\mathbf{A}.\text{T}$ \\
Set to Integer transformation & $\phi: P(\mathcal{K}) \rightarrow [2^n]$ & \texttt{set2int(a)} \\
Integer to Set transformation & $\phi^{-1}:[2^n] \rightarrow P(\mathcal{K})$ & \texttt{int2set(a)} \\
Set difference & $\mathbf{A}\setminus\{i\}$ & $\mathbf{A}$\texttt{.difference}$(\{i\})$ \\
\hline
\multicolumn{3}{@{}l@{}}{%
    \parbox{\linewidth}{\smallskip
    {\small All code is available at \url{https://github.com/fadilah-ilahi/multi-strain-framework}.}}
} \\
\end{tabular}
\label{tab:operation}
\end{table}

Since this framework's goal is to improve computational efficiency in terms of numerical simulation runtime, these matrix operations play a crucial role.

The upcoming subsection will provide an explanatory illustration on how this framework can be applied to the cases of disease transmissions with multiple pathogens like those seen in COVID-19 \cite{dutta2022covid,kim2022model, wang2022multiscale, li2022evaluating} that involve strain replacement \cite{lythgoe2023lineage, martcheva2008vaccine, majeed2022variant, Murall2014RevisingReplacement}.

\subsection*{Strain replacement}

We demonstrate here the utility of our approach by performing a numerical analysis of strain replacement in situations where the disease involves multiple serotypes. Following Martcheva et al. \cite{martcheva2008vaccine}, strain replacement is defined as the substitution of dominant pathogen strains over time by other strains. This is a dynamic process influenced by complex interactions at both within host and population levels. The paper explores the mechanisms of pathogen strain replacement such as superinfection and co-infection. It also emphasises that fully protecting treatment such as perfect vaccine against all strains may lead to strain replacement \cite{majeed2022variant} if the strains differ in their competitive abilities within hosts. Studies by Gog and Grenfell (2002) \cite{gog2002dynamics} and Kucharski et.al (2012) \cite{kucharski2012original, kucharski2016capturing} regarding strain replacement particularly involving antigenic \cite{mulberry2023pneumococcal} and genetic mutation \cite{kim2022model, li2022evaluating}, have already shown how immunity plays a crucial role in shaping strain dynamics. The model predicts that strains shift to the most closely related strains that represent movement to neighbouring points in strain space \cite{kucharski2016capturing, andreasen1997dynamics}. Physical distance in strain space corresponds to antigenic \cite{kim2022model, Truscott2012EssentialInfluenza} similarity and hence cross-immunity strength. 

Recall that the strain replacement is accommodated using cross-reactivity \cite{Castillo-Chavez1989EpidemiologicalCross-immunity, Woodall2014PartialEpidemiology, kamo2002effect, Murall2014RevisingReplacement} parameter $\sigma_\mathcal{I}^i$ that later becomes cross-reactivity matrix $\boldsymbol{\sigma}$. Let history strain $\mathcal{I}$ represent prior exposure to strain $i$ where the $n \times 2^n$ a value of 1 indicates complete immune escape while 0 value means full cross-immunity. Adapting the concept of linear strain space of Gog and Grenfell in \cite{gog2002dynamics}, there are $\varepsilon^{\mathrm{min}\{|i-j|\}_{j \in \mathcal{I}}}$ that describes antigenic space to control immunity decays as strains diverge where ${\mathrm{min}\{|i-j|\}_{j \in \mathcal{I}}}$ is a minimum antigenic distance or shortest distance between strains. In addition, $\sigma^i_\mathcal{I}$ can be written as
\begin{equation}
    \boldsymbol{\sigma} = 
        \begin{cases} 
          1 &  \text{if } \quad \mathcal{I} = \varnothing\\
          0 & \text{if } \quad i \in \mathcal{I}\\
          1-\varepsilon^{\mathrm{min}\{|i-j|\}_{j \in \mathcal{I}}} & \text{otherwise.}
       \end{cases}
\end{equation}
Note that $\varepsilon \in (0,1)$ determines the rate of immunity retention across antigenically distinct strains. By using the $\boldsymbol{\sigma}$ above, the model is assumed to reduce immunity between adjacent strains by a constant multiplicative factor $\varepsilon$ which is justified by Khoury et al.\ \cite{khoury2021neutralizing}, Wang et al.\ \cite{wang2021antibody}, and Wang et al.\ \cite{wang2025optimizing}. 
It also implies that immunity to a given strain provides full protection when $\boldsymbol{\sigma}=1$. The use of the minimum antigenic distance ensures the capture of gradual immune escape and facilitates a smooth transition between closely related variants.

Suppose that we have two strains circulating in the population. To estimate the proportions of strain 1 (and thereby strain 2), this study applies Gaussian Process Classification (GPC) model of \cite{lythgoe2023lineage}, which we explain here. Let $(t_i,y_i)_{i=1}^n$ denote observations where the $t_i \in [t_0, t_n]$ are the times and the $y_i$ are the binary function to label each variants, so that in mathematical formula it also can be written as:
\begin{equation*}
    y_i = \begin{cases} 
          1 &  \text{if } i \text{ is strain 1;} \\
          0 & \text{if } i \text{ is strain 2.}\\
       \end{cases}
\end{equation*} 
This process of estimating the proportion of all variants in population $\pi(t)$ is using log odds with $z_1 \text{ and } z_2$ are strain 1 and strain 2 respectively, so that:
\begin{equation}
    f(t) = \text{log} \left( \frac{\pi(t)}{1-\pi(t)}\right) = \text{log} \left( \frac{z_1(t)}{z_2(t)}\right)
    \label{log-growth}
\end{equation}
where 
\begin{equation}
    \pi(t) = \frac{z_1(t)}{z_1(t) + z_2(t)}
    \label{proportion}
\end{equation}
is the proportion of variant 1 in the population. Our GP kernel \cite{Martin2016BayesianFramework} uses the Radial Basis Function (RBF):
\begin{equation*}
    k(t,t')=\sigma^2 \text{exp}\left(-\frac{(t-t')^2}{2\ell{}^2}\right).
\end{equation*}
Note that $z_1(t) \text { and } z_2(t)$ follow exponential growth. The relative growth advantage for competing variant is calculated by derivative of the proportion $f(t)$ which can be written as
\begin{equation}
    \delta r(t) = \mathrm{\frac{d}{dt}} \text{log}\left(\frac{\pi(t)}{1-\pi(t)}\right) 
    \label{delta_r(t)}
\end{equation}
\noindent
This relative growth advantage represents how much faster one strain grows relative to another. Let say strain 1 is growing faster compared to strain 2, so   $\delta r(t) > 0$, it means that strain 1 is increasing relative to strain 2.

\section*{Results and Discussion}

Our analysis examined the two-strain epidemic model associated with COVID-19 data in Indonesia. The data collected from GISAID (via \url{https://www.epicov.org/}) capture the circulation of Delta and Omicron variants \cite{haykal2024dynamics} as the two dominant strains in the population between July 2021 and January 2023. During this period, Delta and Omicron variants circulated, where eventually the Omicron replaced the Delta variant over a time \cite{prasetyoputri2022dynamics}. This phenomenon is called strain replacement or a ``sweep". This empirical pattern provides a biologically relevant test case for evaluating the proposed framework. Next, we conduct the same analysis of two-strain model using matrix framework to confirm that the model and the real data plot have a similar pattern. Having established qualitative agreement in the low-dimensional case, we then extended the framework to higher-dimensional systems involving 5 and 11 circulating strains to examine its generalisability and dynamic behaviour under increased strain diversity.

Because the number of compartments capturing possible infection histories grows exponentially with the number of strains, higher-dimensional simulations provide not only biological insight but also a test of computational scalability. In the last subsection of the result, we therefore additionally compared the runtime performance of the proposed matrix-based implementation against a direct set-indexed formulation of the same model. This comparison quantified the computational gains achieved through vectorised matrix operations while preserving the underlying epidemiological structure.

\subsection*{Analysis of SARS-CoV-2 variants in strain replacement}
We begin this section by describing the result from the Indonesian SARS-CoV-2 (COVID-19) data specifically in terms of strain replacement between Delta and Omicron. The first result is provided in Fig.~\ref{fig:weekly-cases} showing the number  of Delta and Omicron increased in the population. During the period around January 2021, the variant curves overlapped, indicating that Delta was decreasing while Omicron began to increase. Using the system in Eq.~(\ref{matSIeqns}) we simulate the two-strain model that involved strain replacement with $\beta_2 > \beta_1$ (Fig.~\ref{fig:two-strain-model}). The model parameters here $\beta_1 = 0.32, \beta_2 = 0.5, \sigma = 0.8, \gamma = 1/7$ and $\mu = 1/(70 \cdot 365)$ were selected to reproduce the observed Delta–Omicron replacement pattern in the empirical data. In more specific way, we set $\beta_2 > \beta_1$ reflects higher transmissibility of Omicron relative to Delta \cite{haykal2024dynamics, prasetyoputri2022dynamics} and the cross-reactivity value $\sigma = 0.8$ represents partial immune escape that consistent with the partial but incomplete cross-protection reported between these two variants. This is also confirmed by two-strain model of Eq.~\eqref{matSIeqns} in the Fig.~\ref{fig:two-strain-model} regarding the replacement process between strain 1 and strain 2. The vertical dash line noted the first time that replacement has occurred in the model. In two-strain model (right panel of Fig.~\ref{fig:data-vs-model}), we assume a population size $N = 1000$ is chosen for illustrative purposes and does not represent the true Indonesian population size. The goal of Fig.~\ref{fig:two-strain-model} is not to reproduce the exact scale of the epidemic, but to demonstrate that the qualitative dynamics of strain replacement are consistent with the pattern observed in the empirical data.

\begin{figure}[H]
        \captionsetup[subfigure]{justification=centering}
        \centering
        \begin{subfigure}[b]{0.475\textwidth}
            \centering
            \includegraphics[width=\textwidth]{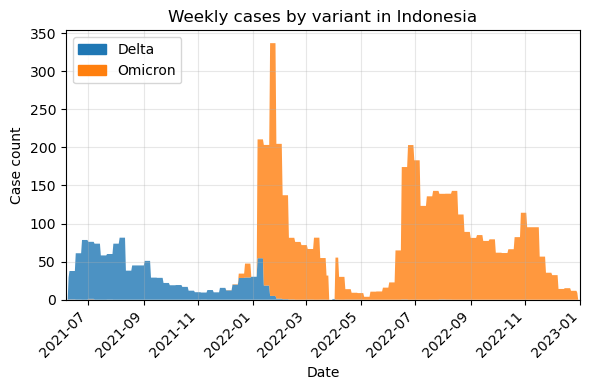}
            \caption{\small}
            \label{fig:weekly-cases}
        \end{subfigure}
        \hfill
        \begin{subfigure}[b]{0.475\textwidth}
            \centering
            \includegraphics[width=\textwidth]{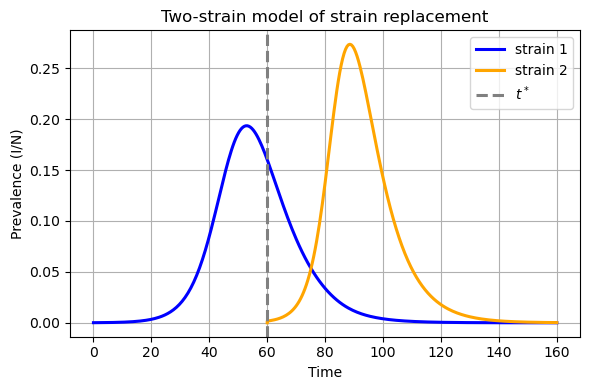}
            \caption{\small}
            \label{fig:two-strain-model}
        \end{subfigure}

        \begin{subfigure}[b]{0.475\textwidth}
            \centering
            \includegraphics[width=\textwidth]{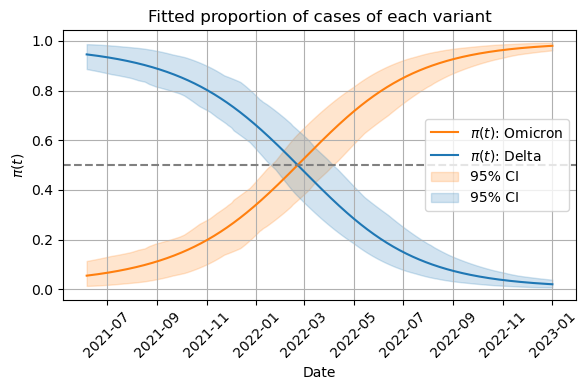}
            \caption{\small}
            \label{fig:prob-growth}
        \end{subfigure}
        \hfill
        \begin{subfigure}[b]{0.475\textwidth}
            \centering
            \includegraphics[width=\textwidth]{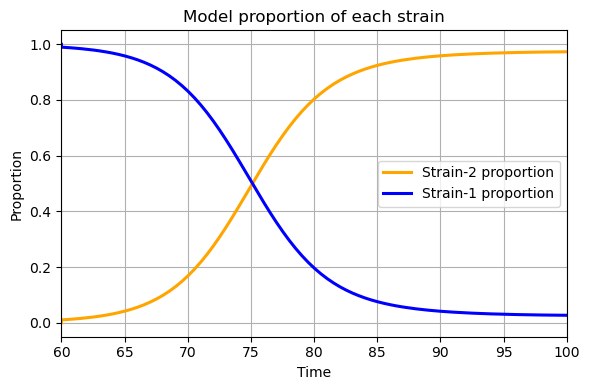}
            \caption{\small}
            \label{fig:prop-two-strains}
        \end{subfigure}

        \begin{subfigure}[b]{0.475\textwidth}
            \centering
            \includegraphics[width=\textwidth]{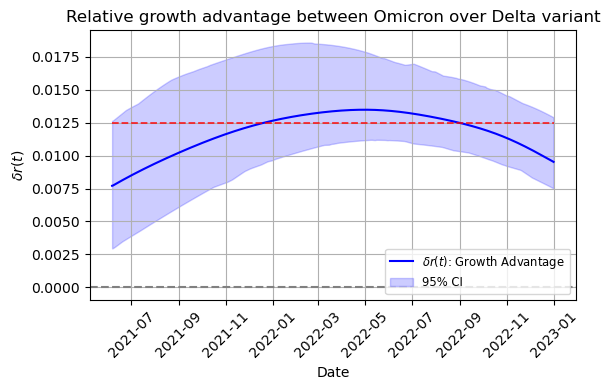}
            \caption{\small}
            \label{fig:growth-advantage}
        \end{subfigure}
        \hfill
        \begin{subfigure}[b]{0.475\textwidth}
            \centering
            \includegraphics[width=\textwidth]{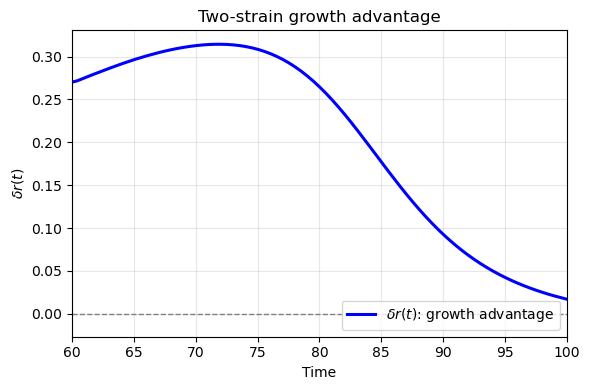}
            \caption{\small}
            \label{fig:growth-advantage-model}
        \end{subfigure}
        
        \caption[]
        {\small \textbf{Data vs model of two-strains dynamics over time.} (a) Number of prevalence of Covid-19 in Indonesia with Delta (blue) and Omicron (orange) circulated in the population from July 2021 to January 2023. (b) Two-strain model dynamic of strain replacement with $\beta_1 = 0.32$, $\beta_2 = 0.5$, $\gamma = 1/7$, $\mu = 1/(70 \cdot 365)$, $N = 1000, \text{ and } \sigma=0.8$. Assumed that strain 1 is replace by strain 2 with $\beta_2 > \beta_1$. (c) Fitted proportion cases with Delta and Omicron variant based on Covid-19 data in Figure (a). (d) Proportion of two variants in Figure (b). (e) Relative growth advantage between Omicron over Delta variant with 95\% CI. Note that the red dash in $\delta(t)=0.0125$ represents the relative growth advantage where Omicron's growth was stronger or weaker than the average. (f) Two-strain growth advantage from the model where it reflects the same trend as the Delta and Omicron data.} 
        \label{fig:data-vs-model}
    \end{figure}

Next, we statistically fit the proportion of Delta over Omicron and Omicron over Delta using Gaussian Process Classification (GPC) \cite{Martin2016BayesianFramework} with 95\% of confidence interval described in the methods. This analysis produces the expected trajectory, showing that Omicron swept the Delta variant following a sinusoidal pattern (Fig.~\ref{fig:prob-growth}). It can be interpreted that from July 2021, Delta variants dominated. However, from December 2021 onward, Omicron increased exponentially and eventually stable by the end of 2022 while Delta declined toward extinction. The cross-over point between Delta and Omicron occurred in April 2022 at which point each variant accounted for 50\% of infections. In this case, the time from emergence around November 2021 to near-complete dominance took roughly 4–5 months.

Examining the model simulation more closely in Fig.~\ref{fig:prop-two-strains}, we see similar proportions to real data, when one strain comes to dominates the other. Using the same parameter values, this ultimately demonstrates that the framework captures key dynamics and reproduces the observed data pattern \cite{souto2019multiple}. Initially, strain 1 (represented by blue curve) is the dominant variant. However, as time progresses from day 60 to 100, a complete shift in dominance occurs with strain 2 emerges and increases rapidly in prevalence.

The rate of strain replacement is captured in the growth advantage $\delta r$ in Fig.~\ref{fig:growth-advantage} and Fig.~\ref{fig:growth-advantage-model}. These plots explain how fast each strain grows relative to the others. This real-world point estimate of $\delta r(t)$ peaks during the initial phase of strain replacement before declining. The red horizontal dashed line at $\delta r = 0.0125$ however, sits within the 95\% confidence interval suggesting that the date are constant with the advantage of Omicron being relatively constant.  The growth advantage in the model is controlled by factor of $\boldsymbol\sigma$ that represents cross-reactivity giving partial immunity in susceptible population. In this case, cross-reactivity values give the population 20\% partial immunity. The fact that empirical data of COVID-19 in Delta and Omicron replacements validates model structure that we constructed confirms that competitive interaction between two-strains is a dynamic process with a pattern that is seen in both theoretical model and the real-world variants data.

\subsection*{Replacement with a larger number of strains}

Building on the validation of the two-strain model from Fig.\ref{fig:data-vs-model} with real data, now we consider the framework of multi-strain model to examine the behaviour under a larger number of strains circulating in population. The result in Fig.\ref{fig:two-strain-model} confirms that the proposed matrix-based system can reproduce the epidemiological patterns as present in real life COVID-19 data for Delta and Omicron replacement observed in Indonesia. It demonstrates accuracy in both model's consistency with data and the robustness of the numerical implementation. Note that working with a larger number of strains will produce larger matrices in the system. To test its generalisation, we applied the same system in eq.\eqref{matSIeqns} to simulate 5-strain and 11-strain interactions with varying transmission rate vector $\boldsymbol{\beta} = (\beta_i)$ with a fixed cross-reactivity decay $\varepsilon=0.2$.

In general, the dynamics of the 5-strain model follow the 2-strain replacement model that had been discussed in previous subsection. Different strains invade the population after the preceding one goes extinct. It happens over and over again depending on how many strains are circulating in the population. As shown in Fig.~\ref{fig:simulation-5strains}, the numerical framework successfully generates the infection waves depends on the number of strain circulated where strain dominance is ordering by basic reproduction number $\mathcal{R}_0$. The left column of simulations represents the number of infection by multiple strains with three different settings based on the transmission vector $\boldsymbol{\beta}$: (1) gradually decrease $\beta_1 < \beta_2 < \ldots{} < \beta_n$, (2) constant $\beta_1 = \beta_2 = \ldots{} = \beta_n$, and (3) gradually increase $\beta_1 > \beta_2 > \ldots{} > \beta_n$. This transmission vector also corresponds to the basic reproduction number $\mathcal{R}_0$ that calculated by $\mathcal{R}_0 = \boldsymbol{\beta} \oslash \left(\boldsymbol{\gamma} + \mu \right)$ as an element-wise operation. Note that $\boldsymbol{\gamma}$ vector is the Hadamard product of recovered parameter $\gamma$ and the number of strains $n$, while $\boldsymbol{\beta}$ is a $2^n \times n$ transmission matrix.

The first strain in Fig.~\ref{fig:5strains-decrease} reaches its maximum around $t=30$ with approximately almost a half number of infection from the total of population and later on appear at longer intervals with lower magnitudes \cite{allen2008mathematical} for the following strain. This occurs as a consequence of the cross-reactivity mechanism, where it can be reduced susceptibility or reduce the ability of transmission \cite{andreasen1997dynamics}. Reduced susceptibility refers to the hosts that build part immunity in their body because they have infected the specific strain, while reduced transmission ability
means that they are less likely to transmit the specific strain to others \cite{andreasen1997dynamics}. This also reflects that basic reproduction number decreases as new strains fail to sustain for larger outbreak. Each peak represents different strain invaded that depends on the rate of transmission. For increasing value of $\boldsymbol{\beta}$, the $\mathcal{R}_0$ will also gradually increasing values. 

In the same manner, it occurs in constant (Fig.~\ref{fig:5strains-constant}) and decreasing pattern (Fig.~\ref{fig:5strains-decrease}). This pattern has been explained by Sayyar and R{\"o}st in 2022 \cite{sayyar2022epidemic} about generated pattern of epidemic under different assumption of cross-immunity. Under certain assumption, the number of infections of the first strain decreases due to recovery and the new strains will sequentially replace the previous ones in dominance. This process repeats until the last strain and eventually becomes stable \cite{gog2002dynamics}.

\begin{figure}[H]
        \captionsetup[subfigure]{justification=centering}
        \centering
        \begin{subfigure}[b]{0.475\textwidth}
            \centering
            \includegraphics[width=\textwidth]{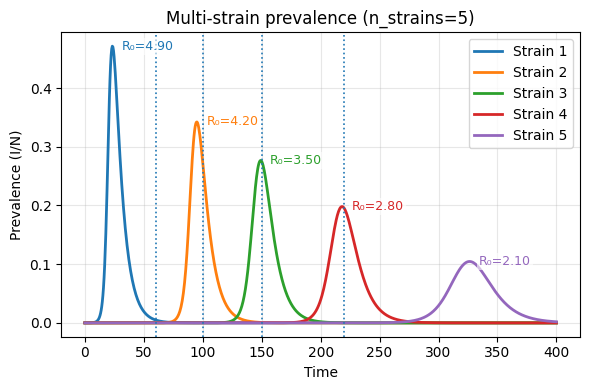}
            \caption{\small}
            \label{fig:5strains-decrease}
        \end{subfigure}
        \hfill
        \begin{subfigure}[b]{0.475\textwidth}
            \centering
            \includegraphics[width=\textwidth]{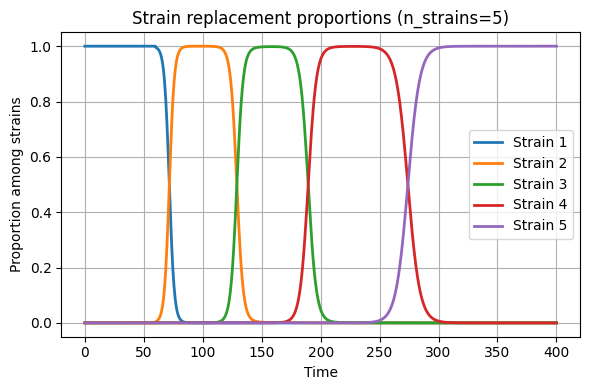}
            \caption{\small}
            \label{fig:prop-5strains-decrease}
        \end{subfigure}

        \begin{subfigure}[b]{0.475\textwidth}
            \centering
            \includegraphics[width=\textwidth]{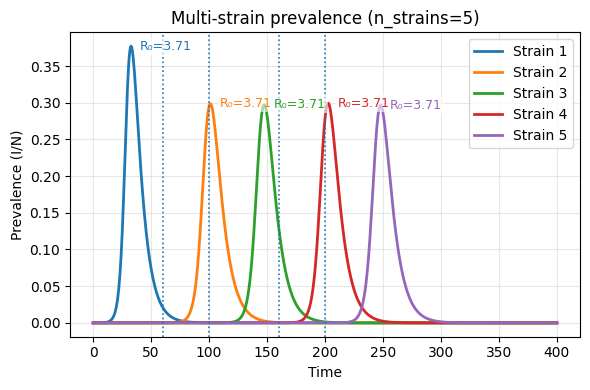}
            \caption{\small}
            \label{fig:5strains-constant}
        \end{subfigure}
        \hfill
        \begin{subfigure}[b]{0.475\textwidth}
            \centering
            \includegraphics[width=\textwidth]{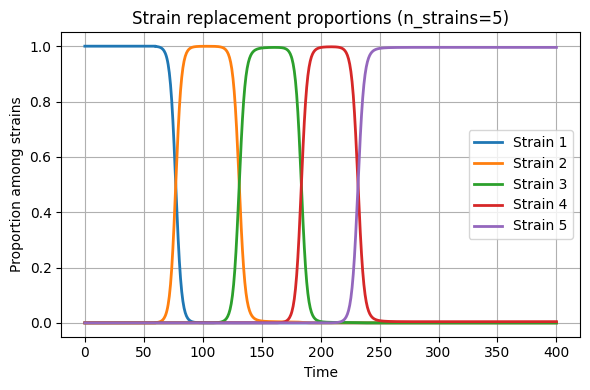}
            \caption{\small}
            \label{fig:prop-5strains-constant}
        \end{subfigure}

        \begin{subfigure}[b]{0.475\textwidth}
            \centering
            \includegraphics[width=\textwidth]{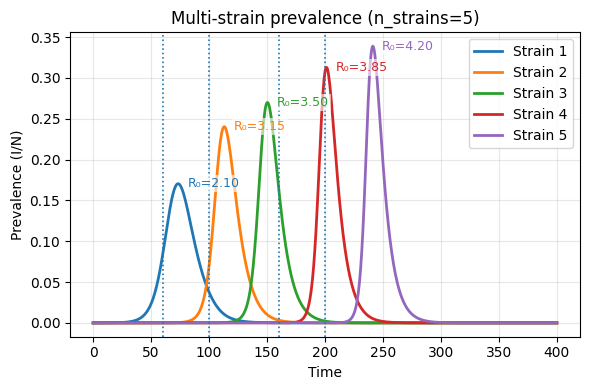}
            \caption{\small}
            \label{fig:5strains-increase}
        \end{subfigure}
        \hfill
        \begin{subfigure}[b]{0.475\textwidth}
            \centering
            \includegraphics[width=\textwidth]{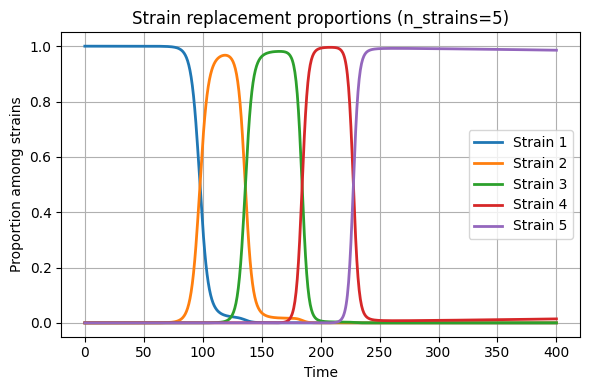}
            \caption{\small}
            \label{fig:prop-5strains-increase}
        \end{subfigure}
        
        \caption[]%
        {\small \textbf{Simulation for $5$-strain replacements.} Each peak represents the number of infection caused by each strain with $\varepsilon = 0.2$, $N = 10^5$, $\gamma = 1/7$, $\mu = 1/(70 \cdot 365)$. These figures are 5-strains replacement and proportion with (a) $\mathcal{R}_0$ is decreasing as $\beta$ is decreasing. (b) 5-strain proportion with $\beta$ decreasing. (c)  $R_0$ constant as $\beta$ is in the same constant value. (d) 5-strain constant proportion. (e) $R_0$ is increasing as $\boldsymbol{\beta}$ is increasing. (f) 5-strain proportion with increasing elements of $\beta$.} 
        \label{fig:simulation-5strains}
    \end{figure}

The strain proportion is qualitatively similar in three different settings of increasing, constant, and decreasing rate of transmission $\beta$. However, a more rapid infection also gives a rapid strain replacement (Fig.~\ref{fig:prop-5strains-increase}). Note that this replacement model does not reflect the severity of disease transmission, and just explains how fast the next strain will spread in population compared to the previous strain. This condition was confirmed with two-strain replacement model as in Fig.~\ref{fig:data-vs-model} where the later strain has a higher peak than the previous one, in that case, Delta vs Omicron. Further simulation about this is also conducted by Fig.~\ref{fig:5strains-increase} where strain $n+1$ spreads much faster than strain $n$ with the basic reproduction number gradually increases following the value of $\beta$ vector.

Finally, in the next figure (Fig.~\ref{fig:simulation-11strains}) we would like to generalise the model into the case of 11 strains circulating in the population. The pattern on both number of prevalence as well as strain proportion does not change so much from the 5-strain infection except it require longer timeline to reach its stability until zero infection condition. All code to produce all result figure is available at \url{https://github.com/fadilah-ilahi/multi-strain-framework}.

    \begin{figure}[H]
        \captionsetup[subfigure]{justification=centering}
        \centering
        \begin{subfigure}[b]{0.475\textwidth}
            \centering
            \includegraphics[width=\textwidth]{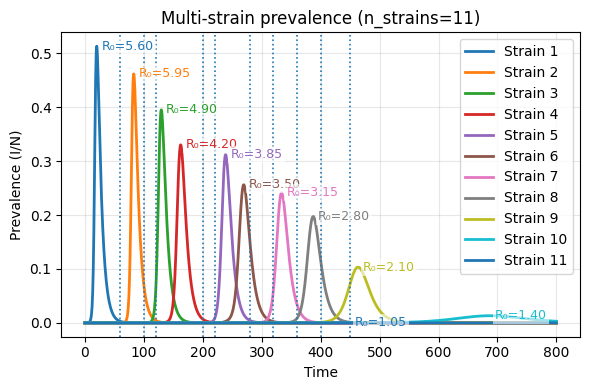}
            \caption{\small}
            \label{fig:11-strains}
        \end{subfigure}
        \hfill
        \begin{subfigure}[b]{0.475\textwidth}
            \centering
            \includegraphics[width=\textwidth]{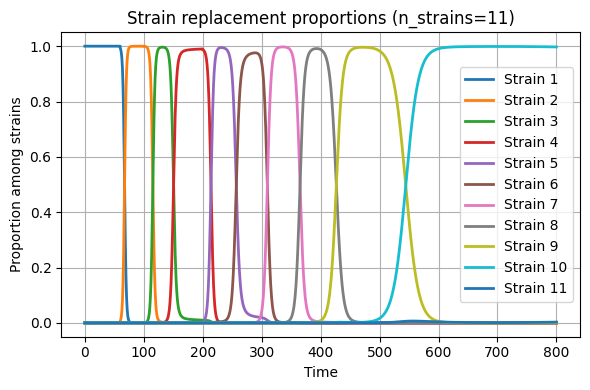}
            \caption{\small}
            \label{fig:prop-11-strains}
        \end{subfigure}
        
        \caption[]%
        {\small \textbf{Simulation for 11-strains replacements.} Each peak represents the number of infection caused by each strain with $\varepsilon = 0.2$, $N = 10^5$, $\gamma = 1/7$, $\mu = 1/(70 \cdot 365)$ with (a) $R_0$ of 11-strain dynamic with decreasing $\beta$, (b) 11-strain proportion with decreasing $\beta$.  } 
        \label{fig:simulation-11strains}
    \end{figure}

As shown in Fig.\ref{fig:simulation-11strains}, the numerical framework reproduces the sequential replacement behaviour outlined in Andreasen \cite{andreasen1997dynamics} which captures both interacting strains and their infection histories.

\subsection*{Model benchmark and computational performance}

The computational cost of history-dependent multi-strain epidemic models implemented in \texttt{Python 3.10} version. The model increases rapidly with the number of circulating strains due to the exponential growth of compartmental states as a consequence of tracking the history of prior strain infections. Based on Andreassen et al.\ \cite{andreasen1997dynamics, kucharski2016capturing}, the total number of compartments is given by $2^{n-1}(n+2)$ consisting of $2^n$ susceptible classes and $n2^{n-1}$ infected classes. As shown in Table~\ref{tab:benchmark}, the size of the system increases from only 8 compartments for $n=2$ strains to 13,312 states for $n=11$.

\begin{table}[H]
\centering
\caption{Summary of model benchmark runtime associated with number of strains}
\begin{tabular}{c r r r r}
\hline
\makecell{Number of\\strain $(n)$} &
\makecell{Total classes\\$2^{n-1}(n+2)$} &
\makecell{Set runtime\\(seconds)} &
\makecell{Matrix runtime\\(seconds)} &
Speedup \\
\hline
2 & 8 &$0.025 \pm 0.001$ & $0.010 \pm 0.003$ &  $2.47$\\
3 & 20 &$0.069 \pm 0.003$ & $0.011 \pm 0.001$ & $6.09$\\
4 & 48 &$0.175 \pm 0.010$ & $0.010 \pm 0.000$ &  $17.28$\\
5 & 112 &$0.419 \pm 0.011$ & $0.013 \pm 0.001$ & $31.68$\\
6 & 256 &$1.005 \pm 0.009$ & $0.013 \pm 0.000$ & $76.79$\\
7 & 576 &$2.498 \pm 0.294$ & $0.016 \pm 0.000$ & $153.57$\\
8 & 1,280 &$5.518 \pm 0.284$ & $0.022 \pm 0.001$ & $250.44$\\
9 & 2,816 &$12.496 \pm 0.204$ & $0.039 \pm 0.006$ & $322.50$\\
10 & 6,144 & $27.341 \pm 0.318$ & $0.052 \pm 0.000$ & $525.57$\\
11 & 13,312 &$60.103 \pm 0.345$ & $0.108 \pm 0.004$ & $557.95$\\
\hline
\end{tabular}
\label{tab:benchmark}
\end{table}

The set-based implementation used for benchmarking is a direct \texttt{Python} translation of system \eqref{setSIeqns}, in which each compartment $S_\mathcal{J}$ and $I^i_\mathcal{J}$ is indexed explicitly using \texttt{Python} set objects with some operations in Table \ref{tab:operation}. For each time step of the ODE solver, the right-hand side is evaluated by iterating over all subsets $\mathcal{J} \subseteq \mathcal{K}$ using nested loops, computing the force of infection and recovery terms through explicit enumeration of the power set. The matrix-based implementation, by contrast, pre-computes the indicator matrices $\boldsymbol{M} \text{ and } \boldsymbol{Q}$ prior to integration and evaluates the right-hand side entirely through \texttt{NumPy} matrix operations where the process eliminates all explicit iteration over strain subsets. Both implementations produce numerically identical output and use the same ODE solver \texttt{scipy.integrate.solve\_ivp} with the \texttt{RK45} method in the same settings and environment to ensure that the runtime comparison reflects only the difference in algebraic implementation strategy.

To evaluate computational efficiency, we compared the runtime between a direct set-indexed implementation and the proposed matrix-based formulation at increasing strain values. While both implementations simulate identical epidemiological dynamics, the set-based formulation exhibits rapidly increasing runtime with growing strain number. In contrast, the matrix-based implementation shows substantially slower runtime growth. For example, for the system of size $n=6$ with 256 compartments, the matrix-based formulation is approximately 76 times faster than the set-based formulation. For $n=10 \text{ and } n=11$ corresponding to 6,144 and 13,312 states respectively, the matrix framework achieves speedups exceeding 525 and 557 times faster, respectively with consistent runtime under one second. These results indicate that performance gains scale positively with model dimension.

\begin{figure}[H]
        \captionsetup[subfigure]{justification=centering}
        \centering
        \begin{subfigure}[b]{0.475\textwidth}
            \centering
            \includegraphics[width=\textwidth]{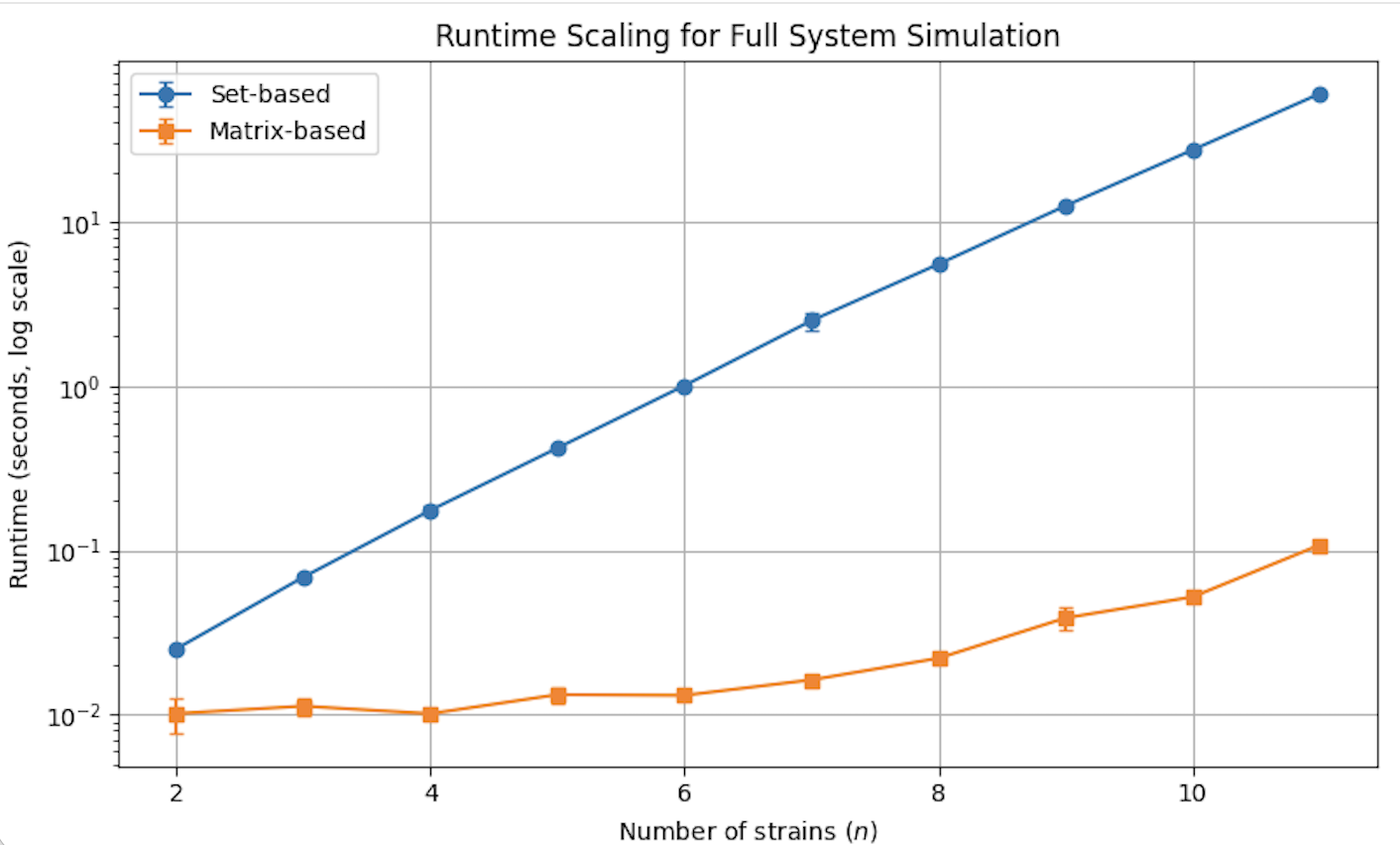}
            \caption{\small}
            \label{fig:runtime_a}
        \end{subfigure}
        \hfill
        \begin{subfigure}[b]{0.475\textwidth}
            \centering
            \includegraphics[width=\textwidth]{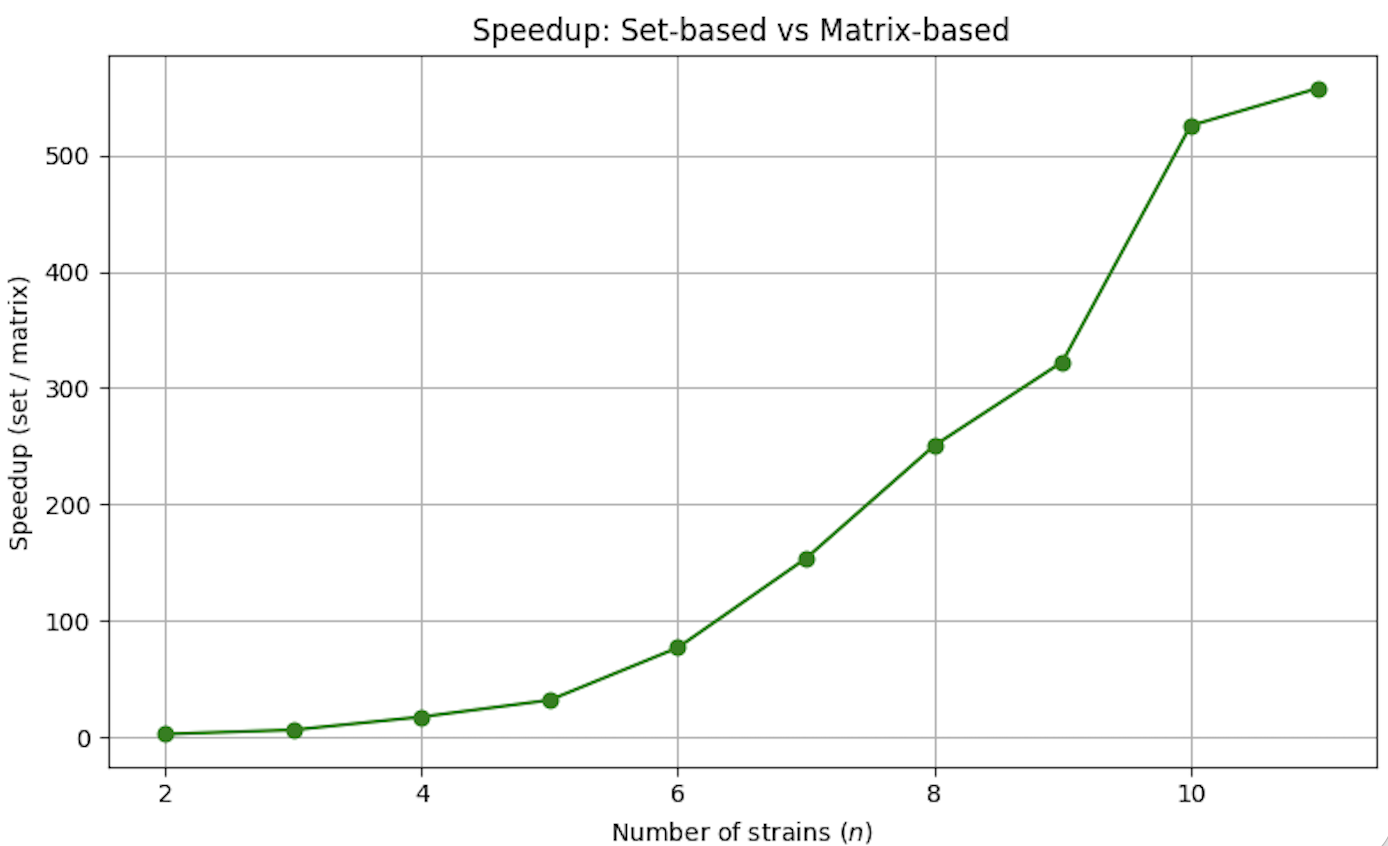}
            \caption{\small}
            \label{fig:runtime_b}
        \end{subfigure}
        
        \caption[]%
        {\small \textbf{Runtime simulation of set and matrix model.} (a) Comparison runtime (seconds, log scale) for full-system simulations as a function of the number of circulating strains $n$. Results compare a direct set-indexed implementation with the proposed matrix-based formulation. Points represent mean runtime across simulations, while small error bars indicate standard deviation. (b) Corresponding speedup factor defined as the ratio of set-based runtime to matrix-based runtime. The number of states in the system increases as $2^{n-1}(n+2)$, reflecting exponential growth in history-dependent states.} 
        \label{fig:runtime}
    \end{figure}

Figure~\ref{fig:runtime_a} illustrates runtime growth on a logarithmic scale. The set-based implementation exhibits near-exponential growth consistent with explicit iteration over strain histories, whereas the matrix-based formulation displays substantially flatter scaling behavior. Figure~\ref{fig:runtime_b} presents the corresponding speedup factor, demonstrating that relative efficiency increases systematically with strain number. 

Moreover, the matrix formulation does not reduce the dimensionality of the epidemiological system. Improvements arise from replacing nested iteration (set looping) in computation into vectorised linear algebra operations. Bringing this idea, the proposed framework substantially improves practical scalability while preserving epidemiological interpretation. These findings confirm that the matrix-based framework enables routine simulation of higher-dimensional multi-strain systems that would otherwise become computationally challenging under direct set-indexed implementations.

We have introduced a matrix-based numerical framework for modeling history-dependent multi-strain epidemic dynamics and demonstrated its applicability to strain replacement phenomena. By reformulating the traditional set-indexed Susceptible-Infected (SI) model into a matrix-based expression, this framework captures essential features of biological mechanisms such as partial cross-reactivity and sequential infection by specific strain while substantially simplifying numerical implementation. Note that the recovery compartment in this case is implicitly put into the system in the S compartment since the multi-strain infection allows each individual of having secondary, tertiary, and $n$-series of infection so they will remove from infected and go back into susceptible class.

This numerical framework provides several important insights into the dynamics of multi-strain pathogens addressing the limitations of set-based SI models. In term of computational work, this model is also computationally lightweight and improves runtime efficiency. In addition, the model also allows epidemic scenarios in different settings and is amenable to further development, since the system can be readily transformed to integer matrix where eventually easily to produce the output of numerical simulation.

One of the key advantages of this model is its ability to handle the simultaneous circulation of multiple strains. In real-world epidemics, it is common for multiple strains of a pathogen to co-circulate within a population, each interacting with the others through mechanisms such as cross-reactivity or cross-immunity \cite{kamo2002effect}. This interaction can lead to a variety of outcomes, including strain replacement or sequential waves of infection \cite{chisholm2020epidemiological}. By modeling these dynamics explicitly, the matrix representation framework provides a powerful tool for understanding how multi-strain pathogens behave under different conditions.

In summary, this numerical framework offers a novel, robust, and effective approach to model the dynamics of multi-strain pathogens. By capturing the effect of cross-immunity and infection histories, this approach enhances our ability to predict the outcomes of infectious disease outbreaks precisely. Future work should focus on refining the model by exploring more complex patterns of cross-immunity. Additionally, applying the model to real-world data from past epidemics can help validate its predictions and provide further insights into the behavior of multi-strain diseases. 

\section*{Data Availability}
All data and code used to produce the figures in this paper is publicly available at \url{https://github.com/fadilah-ilahi/multi-strain-framework}.

\section*{Acknowledgments}
FI is funded and sponsored by Indonesia Endowment Fund for Education Agency (LPDP Scholarship) from Ministry of Finance of Republic of Indonesia (Sponsor ID 202301220200205) during her doctoral program at The University of Manchester. TH and IH are supported by the Wellcome Trust (Grant number 227438/Z/23/Z). TH is supported by the Medical Research Council (Grant number UKRI483), and UKRI (Grant number MR/Z505328/1). The authors declare no competing interests.

\section*{Author Contributions}
\textbf{Conceptualisation:} FI, IH, TH.
\textbf{Methodology:} FI, IH, TH.
\textbf{Software:} FI.
\textbf{Supervision:} IH, TH.
\textbf{Writing -- original draft:} FI.
\textbf{Writing -- review and editing:} FI, IH, TH.

\printbibliography

@misc{WHOvariant,
	author = {{World Health Organisation}},
	title = {Tracking {SARS-CoV-2} variants},
	note = {https://www.who.int/activities/tracking-SARS-CoV-2-variants},
	year = {2026},
}

@article{dyson2021future,
	author = {Dyson, Louise and Hill, Edward M. and Moore, Sam and Curran-Sebastian, Jacob and Tildesley, Michael J. and Lythgoe, Katrina A. and House, Thomas and Pellis, Lorenzo and Keeling, Matt J.},
	journal = {Nature Communications},
	number = {1},
	pages = {5730},
	title = {Possible future waves of SARS-CoV-2 infection generated by variants of concern with a range of characteristics},
	volume = {12},
	year = {2021},
}

@article{kucharski2012original,
    author = {Kucharski, Adam J. AND Gog, Julia R.},
    journal = {PLOS Computational Biology},
    publisher = {Public Library of Science},
    title = {The Role of Social Contacts and Original Antigenic Sin in Shaping the Age Pattern of Immunity to Seasonal Influenza},
    year = {2012},
    month = {10},
    volume = {8},
    pages = {e1002741},
    number = {10},
}

@article{senaratne2020dengue,
  title={Dengue virus co-infections with multiple serotypes do not result in a different clinical outcome compared to mono-infections},
  author={Senaratne, UTN and Murugananthan, K and Sirisena, PDNN and Carr, JM and Noordeen, F},
  journal={Epidemiology \& Infection},
  volume={148},
  year={2020},
  publisher={Cambridge University Press}
}

@book{allen2008mathematical,
  title={Mathematical epidemiology},
  author={Allen, Linda JS and Brauer, Fred and Van den Driessche, Pauline and Wu, Jianhong},
  volume={1945},
  year={2008},
  publisher={Springer}
}

@article{kucharski2016capturing,
  title={Capturing the dynamics of pathogens with many strains},
  author={Kucharski, Adam J and Andreasen, Viggo and Gog, Julia R},
  journal={Journal of mathematical biology},
  volume={72},
  pages={1--24},
  year={2016},
  publisher={Springer}
}

@article{andreasen1997dynamics,
  title={The dynamics of cocirculating influenza strains conferring partial cross-immunity},
  author={Andreasen, Viggo and Lin, Juan and Levin, Simon A},
  journal={Journal of mathematical biology},
  volume={35},
  pages={825--842},
  year={1997},
  publisher={Springer}
}

@article{andreasen2023seasonality,
  title={Seasonality and the Coexistence of Pathogen Strains},
  author={Andreasen, Viggo and Dwyer, Greg},
  journal={The American Naturalist},
  volume={201},
  number={5},
  pages={639--658},
  year={2023},
  publisher={The University of Chicago Press Chicago, IL}
}

@article{wikramaratna2010effects,
  title={The effects of tertiary and quaternary infections on the epidemiology of dengue},
  author={Wikramaratna, Paul S and Simmons, Cameron P and Gupta, Sunetra and Recker, Mario},
  journal={PloS one},
  volume={5},
  number={8},
  pages={e12347},
  year={2010},
  publisher={Public Library of Science San Francisco, USA}
}

@article{champagne2019comparison,
  title={Comparison of stochastic and deterministic frameworks in dengue modelling},
  author={Champagne, Clara and Cazelles, Bernard},
  journal={Mathematical biosciences},
  volume={310},
  pages={1--12},
  year={2019},
  publisher={Elsevier}
}

@article{souto2019multiple,
  title={Multiple-serotype models of dengue virus transmission: simulation study and perspectives for the application of inference in epidemiological surveillance},
  author={Souto-Maior, Caetano},
  journal={bioRxiv},
  pages={583351},
  year={2019},
  publisher={Cold Spring Harbor Laboratory}
}

@article{kamo2002effect,
  title={The effect of cross-immunity and seasonal forcing in a multi-strain epidemic model},
  author={Kamo, Masashi and Sasaki, Akira},
  journal={Physica D: Nonlinear Phenomena},
  volume={165},
  number={3-4},
  pages={228--241},
  year={2002},
  publisher={Elsevier}
}

@article{chisholm2020epidemiological,
  title={Epidemiological consequences of enduring strain-specific immunity requiring repeated episodes of infection},
  author={Chisholm, Rebecca H and Sonenberg, Nikki and Lacey, Jake A and McDonald, Malcolm I and Pandey, Manisha and Davies, Mark R and Tong, Steven YC and McVernon, Jodie and Geard, Nicholas},
  journal={PLoS Computational Biology},
  volume={16},
  number={6},
  pages={e1007182},
  year={2020},
  publisher={Public Library of Science San Francisco, CA USA}
}

@article{lazebnik2022generic,
  title={Generic approach for mathematical model of multi-strain pandemics},
  author={Lazebnik, Teddy and Bunimovich-Mendrazitsky, Svetlana},
  journal={PloS one},
  volume={17},
  number={4},
  pages={e0260683},
  year={2022},
  publisher={Public Library of Science San Francisco, CA USA}
}

@article{Recker2009ImmunologicalDengue,
    title = {{Immunological serotype interactions and their effect on the epidemiological pattern of dengue}},
    year = {2009},
    journal = {Proceedings of the Royal Society B: Biological Sciences},
    author = {Recker, Mario and Blyuss, Konstantin B. and Simmons, Cameron P. and Hien, Tran Tinh and Wills, Bridget and Farrar, Jeremy and Gupta, Sunetra},
    number = {1667},
    month = {7},
    pages = {2541--2548},
    volume = {276},
    publisher = {Royal Society},
    doi = {10.1098/rspb.2009.0331},
    issn = {14712970},
    pmid = {19369266}
}

@article{Woodall2014PartialEpidemiology,
    title = {{Partial cross-enhancement in models for dengue epidemiology}},
    year = {2014},
    journal = {Journal of Theoretical Biology},
    author = {Woodall, Hannah and Adams, Ben},
    pages = {67--73},
    volume = {351},
    publisher = {Elsevier},
    url = {http://dx.doi.org/10.1016/j.jtbi.2014.02.016},
    doi = {10.1016/j.jtbi.2014.02.016},
    issn = {10958541},
    pmid = {24568780}
}

@article{martcheva2008vaccine,
  title={Vaccine-induced pathogen strain replacement: what are the mechanisms?},
  author={Martcheva, Maia and Bolker, Benjamin M and Holt, Robert D},
  journal={Journal of the Royal Society Interface},
  volume={5},
  number={18},
  pages={3--13},
  year={2008},
  publisher={The Royal Society London}
}

@article{mulberry2023pneumococcal,
  title={Pneumococcal population dynamics: Investigating vaccine-induced changes through multiscale modelling},
  author={Mulberry, Nicola and Rutherford, Alexander R and Colijn, Caroline},
  journal={PLoS Computational Biology},
  volume={19},
  number={12},
  pages={e1011755},
  year={2023},
  publisher={Public Library of Science San Francisco, CA USA}
}

@article{lythgoe2023lineage,
  title={Lineage replacement and evolution captured by 3 years of the United Kingdom Coronavirus (COVID-19) Infection Survey},
  author={Lythgoe, Katrina A and Golubchik, Tanya and Hall, Matthew and House, Thomas and Cahuantzi, Roberto and MacIntyre-Cockett, George and Fryer, Helen and Thomson, Laura and Nurtay, Anel and Ghafani, Mahan and others},
  journal={Proceedings of the Royal Society B},
  volume={290},
  number={2009},
  pages={20231284},
  year={2023},
  publisher={The Royal Society}
}

@article{ballesteros2009influenza,
  title={Influenza A gradual and epochal evolution: insights from simple models},
  author={Ballesteros, S{\'e}bastien and Vergu, Elisabeta and Cazelles, Bernard},
  journal={PLoS One},
  volume={4},
  number={10},
  pages={e7426},
  year={2009},
  publisher={Public Library of Science San Francisco, USA}
}

@article{boni2004influenza,
  title={Influenza drift and epidemic size: the race between generating and escaping immunity},
  author={Boni, Maciej F and Gog, Julia R and Andreasen, Viggo and Christiansen, Freddy B},
  journal={Theoretical population biology},
  volume={65},
  number={2},
  pages={179--191},
  year={2004},
  publisher={Elsevier}
}

@article{gog2002dynamics,
  title={Dynamics and selection of many-strain pathogens},
  author={Gog, Julia R and Grenfell, Bryan T},
  journal={Proceedings of the National Academy of Sciences},
  volume={99},
  number={26},
  pages={17209--17214},
  year={2002},
  publisher={National Academy of Sciences}
}

@article{wang2022multiscale,
  title={A multiscale model of COVID-19 dynamics},
  author={Wang, Xueying and Wang, Sunpeng and Wang, Jin and Rong, Libin},
  journal={Bulletin of Mathematical Biology},
  volume={84},
  number={9},
  pages={99},
  year={2022},
  publisher={Springer}
}

@article{li2022evaluating,
  title={Evaluating the impact of SARS-CoV-2 variants on the COVID-19 epidemic and social restoration in the United States: a mathematical modelling study},
  author={Li, Rui and Li, Yan and Zou, Zhuoru and Liu, Yiming and Li, Xinghui and Zhuang, Guihua and Shen, Mingwang and Zhang, Lei},
  journal={Frontiers in Public Health},
  volume={9},
  pages={801763},
  year={2022},
  publisher={Frontiers Media SA}
}

@article{majeed2022variant,
  title={Variant-specific interventions to slow down replacement and prevent outbreaks},
  author={Majeed, Bushra and Tosato, Marco and Wu, Jianhong},
  journal={Mathematical Biosciences},
  volume={343},
  pages={108703},
  year={2022},
  publisher={Elsevier}
}

@article{kim2022model,
  title={A model of COVID-19 pandemic with vaccines and mutant viruses},
  author={Kim, Young Rock and Choi, Yong-Jae and Min, Youngho},
  journal={Plos one},
  volume={17},
  number={10},
  pages={e0275851},
  year={2022},
  publisher={Public Library of Science San Francisco, CA USA}
}

@article{dutta2022covid,
  title={COVID-19 waves: variant dynamics and control},
  author={Dutta, Abhishek},
  journal={Scientific Reports},
  volume={12},
  number={1},
  pages={9332},
  year={2022},
  publisher={Nature Publishing Group UK London}
}

@article{Truscott2012EssentialInfluenza,
    title = {{Essential epidemiological mechanisms underpinning the transmission dynamics of seasonal influenza}},
    year = {2012},
    journal = {Journal of the Royal Society Interface},
    author = {Truscott, James and Fraser, Christophe and Cauchemez, Simon and Meeyai, Aronrag and Hinsley, Wes and Donnelly, Christl A. and Ghani, Azra and Ferguson, Neil},
    number = {67},
    month = {2},
    pages = {304--312},
    volume = {9},
    publisher = {Royal Society},
    doi = {10.1098/rsif.2011.0309},
    issn = {17425662},
    pmid = {21715400}
}

@techreport{Castillo-Chavez1989EpidemiologicalCross-immunity,
    title = {{Epidemiological models with age structure, proportionate mixing, and cross-immunity}},
    year = {1989},
    booktitle = {Journal of Mathematical Biology {\~{}})Spfinger-Verlag},
    author = {Castillo-Chavez, C and Hethcote, H W and Andreasen ]'2', V and Levin ]'2', S A and Liu ]'2', W M},
    pages = {233--258},
    volume = {27}
}

@article{Murall2014RevisingReplacement,
    title = {{Revising ecological assumptions about Human papillomavirus interactions and type replacement}},
    year = {2014},
    journal = {Journal of Theoretical Biology},
    author = {Murall, Carmen Lía and McCann, Kevin S. and Bauch, Chris T.},
    month = {6},
    pages = {98--109},
    volume = {350},
    publisher = {Academic Press},
    doi = {10.1016/j.jtbi.2013.12.028},
    issn = {10958541},
    pmid = {24412334}
}

@inproceedings{sayyar2022epidemic,
  title={Epidemic patterns of emerging variants with dynamical social distancing},
  author={Sayyar, Golsa and R{\"o}st, Gergely},
  booktitle={International Symposium on Mathematical and Computational Biology},
  pages={215--232},
  year={2022},
  organization={Springer}
}

@book{Martin2016BayesianFramework,
    title = {{Bayesian analysis with Python : unleash the power and flexibility of the Bayesian framework}},
    year = {2016},
    author = {Martin, Osvaldo.},
    publisher = {Packt Publishing},
    isbn = {9781785883804}
}

@article{haykal2024dynamics,
  title={Dynamics of SARS-CoV-2 Spike RBD Protein Mutation and Pathogenicity Consequences in Indonesian Circulating Variants in 2020--2022},
  author={Haykal, Nabiel Muhammad and Fadilah, Fadilah and Dewi, Beti Ernawati and Erlina, Linda and Prawiningrum, Aisyah Fitriannisa and Hegar, Badriul},
  journal={Genes},
  volume={15},
  number={11},
  pages={1468},
  year={2024},
  publisher={MDPI}
}

@article{prasetyoputri2022dynamics,
  title={The dynamics of circulating SARS-CoV-2 lineages in Bogor and surrounding areas reflect variant shifting during the first and second waves of COVID-19 in Indonesia},
  author={Prasetyoputri, Anggia and Dharmayanthi, Anik B and Iryanto, Syam B and Andriani, Ade and Nuryana, Isa and Wardiana, Andri and Ridwanuloh, Asep M and Swasthikawati, Sri and Hariyatun, Hariyatun and Nugroho, Herjuno A and others},
  journal={PeerJ},
  volume={10},
  pages={e13132},
  year={2022},
  publisher={PeerJ Inc.}
}

@article{mahroum2024vaccine,
  title={Vaccine-induced strain replacement: theory and real-life implications},
  author={Mahroum, Naim and Karaoglan, Birnur Sinem and Ulucam, Enes Sedat and Shoenfeld, Yehuda},
  journal={Future Microbiology},
  volume={19},
  number={11},
  pages={1017--1026},
  year={2024},
  publisher={Taylor \& Francis}
}

@misc{Python,
  title = {Python Documentation Programming Language},
  howpublished = {\url{https://www.python.org/}},
  note = {Accessed: 2026-07-23}
}

@article{wang2025optimizing,
  title={Optimizing a human monoclonal antibody for better neutralization of SARS-CoV-2},
  author={Wang, Qian and Guo, Yicheng and Casner, Ryan G and Yu, Jian and Nair, Manoj S and Ho, Jerren and Reddem, Eswar R and Mellis, Ian A and Wu, Madeline and Tzang, Chih-Chen and others},
  journal={Nature Communications},
  volume={16},
  number={1},
  pages={6195},
  year={2025},
  publisher={Nature Publishing Group UK London}
}

@article{wang2021antibody,
  title={Antibody resistance of SARS-CoV-2 variants B. 1.351 and B. 1.1. 7},
  author={Wang, Pengfei and Nair, Manoj S and Liu, Lihong and Iketani, Sho and Luo, Yang and Guo, Yicheng and Wang, Maple and Yu, Jian and Zhang, Baoshan and Kwong, Peter D and others},
  journal={Nature},
  volume={593},
  number={7857},
  pages={130--135},
  year={2021},
  publisher={Nature Publishing Group UK London}
}

@article{khoury2021neutralizing,
  title={Neutralizing antibody levels are highly predictive of immune protection from symptomatic SARS-CoV-2 infection},
  author={Khoury, David S and Cromer, Deborah and Reynaldi, Arnold and Schlub, Timothy E and Wheatley, Adam K and Juno, Jennifer A and Subbarao, Kanta and Kent, Stephen J and Triccas, James A and Davenport, Miles P},
  journal={Nature medicine},
  volume={27},
  number={7},
  pages={1205--1211},
  year={2021},
  publisher={Nature Publishing Group US New York}
}
\end{document}